\newtheorem{theorem}{Theorem}
\numberwithin{equation}{section}
\newcommand{\newc}{\newcommand}
\newc{\ra}{\rightarrow}
\newc{\lra}{\leftrightarrow}
\newc{\be}{\begin{equation}}
\newc{\ee}{\end{equation}}
\newc{\bg}{\begin{gathered}}
\newc{\eg}{\end{gathered}}
\newc{\bs}{\begin{split}}
\newc{\es}{\end{split}}
\newc{\ba}{\begin{eqnarray}}
\newc{\ea}{\end{eqnarray}}
\newc{\ov}{\overline}
\newc{\pa}{\partial}
\newc{\D}{\Delta}
\begin{document}
\begin{titlepage}
\begin{center}
\begin{Large}
{\bf

 Arnol'd cat map lattices
}

\end{Large}

\hskip1.0truecm

Minos Axenides$^{(a)}$\footnote{E-Mail: axenides@inp.demokritos.gr}, Emmanuel Floratos$^{(a,b)}$\footnote{E-Mail: mflorato@phys.uoa.gr} and Stam Nicolis$^{(c)}$\footnote{E-Mail: stam.nicolis@lmpt.univ-tours.fr, stam.nicolis@idpoisson.fr}

\hskip1.0truecm
{\sl 
${ }^{(a)}$Institute for Nuclear and Particle Physics, NCSR ``Demokritos''\\Aghia Paraskevi 15310, Greece\\
${ }^{(b)}$Physics Department, University of Athens\\ Athens, 15771 Greece\\
${ }^{(c)}$Institut Denis Poisson, Université de Tours, Université d'Orléans, CNRS\\
Parc Grandmont, 37200 Tours, France

}
\end{center}
\begin{abstract}
We construct Arnol'd cat map lattice field theories in phase space and configuration space. In phase space we impose that the  evolution operator of the linearly coupled maps be an element of the symplectic group, in direct generalization of the case of one map. To this end we exploit the correspondence between the cat map and the Fibonacci sequence. 
The chaotic properties of these systems can be also  understood from the equations of motion in configuration space. These describe inverted harmonic oscillators, where the  runaway behavior of the potential compete with the toroidal compactification of the phase space.
We   highlight  the  spatio-temporal chaotic properties of these systems  using standard  benchmarks for probing deterministic chaos  of dynamical systems, namely  the complete dense  set of  unstable  periodic orbits, which, for  long periods,  lead to ergodicity and mixing. The spectrum of the periods exhibits a  strong dependence on  the strength and the range of the interaction. 
   
 \end{abstract}
\end{titlepage}
\newpage
\section{Introduction and overview}\label{intro}
Low dimensional systems  with few degrees of freedom have provided a fertile ground for the development of the  concepts and methods  of deterministic chaos with their characteristic  disordered behaviour at the classical and quantum level~\cite{arnold2007mathematical,cvitanovic2005chaos,zaslavsky2008,gutzwiller2013chaos}. 
While the original focus of interest centered around  dissipative and Hamiltonian low dimensional systems, progress  was quickly followed by efforts to understand the complex dynamics of  high dimensional systems consisting  of many coupled chaotic degrees of freedom~\cite{crutchfield1987phenomenology,kaneko2001complex,vulpiani2009chaos,badiicomplexity}.
They are spatially extended systems, which can be  driven away from equilibrium and   exhibit spatio-temporal chaos (STC). They give rise to  diverse pattern formation ~\cite{cross1994spatiotemporal, cross1993pattern}  as the result of their highly complex dynamical behavior. 
STC models possess either continuous or discrete time dynamics (maps). The spatial degrees of freedom are either discrete or continuous giving rise, respectively, to lattice dynamics or an effective hydrodynamic description in terms of continuous fields.  Such systems described are described by their equations of motion:  Partial differential equations, systems of coupled ordinary differential equations;  or as  coupled map lattices (CMLs) with continuous state spaces. Another way is  using    cellular automata with discrete state spaces ~\cite{Kaneko:2014}.

Indicators for chaos for spatiotemporal  systems have been proposed--namely the finite amplitude Lyapunov exponents, covariant  Lyapunov vector exponents ~\cite{pikovsky2016lyapunov}, as well as benchmark dynamical entropies,  like the  Kolmogorov-Sinai entropy~\cite{Sinai1994,cornfeld2012ergodic}.

While the approach to the problem of describing  spatio-temporal chaos  in coupled map lattices  is mostly numerical and a comprehensive  understanding from the analytical side is, still, lacking,  there has been some activity recently in an effort to acquire an analytical understanding of    the dynamics of {\em linear}  CMLs ~\cite{cvitanovic2000chaotic,gutkin2016,gutkin:2019sca,cvitanovic2020spatiotemporal}. 
One  aim of this program of research is to define chaotic field theories made up  of  chaotic oscillator constituents,  in an effort to provide a local description of  some of the coherent structures that emerge from  the dynamics of continuous fluid systems in the régime of weak turbulent flows~\cite{holmes2012turbulence}.
However it isn't clear, whether the complexity of these structures is due to the known complex behavior of their constituents, the result of the way they are coupled, or both. The reason is that the typical way for establishing such a relation, namely the study of symmetries, has proven to be very difficult to follow for these systems. 

There are, however, cases where this approach is possible. Our present work focuses  on the systematic construction of a special class of CMLs, the lattice field theories of Arnol'd cat maps in various dimensions, taking into account their symmetries in phase space--namely covariance under symplectic transformations--and how these are related to the symmetries in configuration space. 

In the present paper we  provide  the classical framework for describing the dynamics of chaotic oscillators  via the dynamics of    $n,$ linearly coupled, Arnol'd cat maps (CACML), subject to periodic boundary conditions. Their  phase space is the torus  $\mathbb{T}^{2n}[\mathbb{Z}].$ and their dynamics is represented by elements  ${\sf M}$ of the symplectic group, $\mathrm{Sp}_{2n}[\mathbb{Z}].$ This is the generalization for $n$ cat maps, of the symmetry properties of one cat map. 
Within this framework  we can vary the dimensionality of the lattice, the number of oscillators, the strength  of their interactions as well as the range of the non-locality thereof.

We can therefore study in detail the classical chaotic properties of this system, since it's possible to obtain explicit expressions, that can be reliably evaluated. We focus, as an example,  on  the classical spatio-temporal chaotic properties of CACML's in one dimension, using a  benchmark of chaos of any chaotic system, namely, the set of all of its unstable periodic orbits~\cite{auerbach1987exploring,cvitanovic2005chaos}.

The periodic orbits of the CACMLs  are classified by  initial conditions,  which have  rational  coordinates in the toroidal phase space, with  common denominator  $N.$  Upon varying $N=3,5,7,\ldots,$ over the primes and, more, generally, the odd integers (even integers have subtle issues, particularly in the quantum case~\cite{Floratos:2005yj,kaiblinger2009metaplectic},  so require special study) we obtain all the periodic orbits (which  are all unstable)~\cite{Sinai1994}. 
For large $N$  and for fixed size of the toroidal phase space  we can approach a scaling limit.  For  long periods the periodic trajectories  lead to ergodicity and strong  mixing~\cite{Sinai1994,cornfeld2012ergodic,sep-ergodic-hierarchy}. 

This limit can be subtle, already for one map~\cite{Axenides:2019lea}. 

In the case of  translational invariant couplings we find explicitly: a) all the periodic orbits, b) the Lyapunov spectra and c)the Kolmogorov--Sinai entropy of the CACMLs as a function of the strength and the  range of interactions. Armed with these analytic results we find that the maximum Lyapunov exponent of these systems is an increasing  function of both the coupling constant and of the range  of the interaction.

We provide also a method for determining the periods of the orbits based on the properties of  matrix Fibonacci polynomials. These periods are random functions of $N$ and they have stronger dependence on the coupling and the range of the  interaction than in   the non interacting case, i.e. for the single cat map. We present  several numerical examples in support of this observation. The dependence of the periods on $N$  provides information about the quantum spectra of these systems, which deserve a study in their own right. 

For the case of the single cat map, which corresponds to $n=1,$ 
a detailed study of the periods, of their relation to the energy spectrum and its asymptotic properties  for the quantum system, can be found   in ref.~\cite{percival1987linear,percival1987arithmetical,bird1988periodic,Keating1991}.

Now we would like to discuss our particular motivation for this study. 

This  derives from the realization that the physics of quantum black holes is a prime example of a chaotic many-body system, when the microstates can be resolved. Therefore it has become of topical interest to construct models for both the probes and for the near horizon geometry, that is defined by the microstates.  This is why a consistent description of chaotic field theories has become a fascinating bridge that establishes novel relations between the subjects of interest to  the high energy community and the community of classical and quantum chaos. This can be summarized as follows: 
 
Black Holes (BH) are at present understood to be physical systems of finite  entropy which, for an observer at infinity,  is described by  the dynamics of the microstates of the black hole, that live in the near horizon geometry. The chaotic  dynamics of these microstates has new features, such as fast scrambling and non-locality.  Specifically 
it has been conjectured  that black holes are the fastest information scramblers in Nature ~\cite{Hayden:2007cs,Sekino:2008he,Shenker:2013pqa,Maldacena:2015waa,Bousso:2022ntt},  that exhibit  unitary quantum evolution. 
  This, in turn, motivated the search for  models that can capture these features.  One class of  such models builds upon  the relation between  the near horizon shock wave geometries and of the so-called gravitational memory effects.
In these models it seems, indeed,  possible--in principle--that the near horizon region of a black hole  could form a chaotic memory, i.e. a basin of purely geometrical data  of all of its past and recent  history, through the 't Hooft mechanism of permanent space-time displacements caused by high energy scattering events of infalling  wave packets~\cite{Hooft:2015jea,Barrabes:2000fr,Papadodimas:2012aq,Banerjee:2016mhh,Stanford:2014jda,Shenker:2013yza,Mezei:2016wfz,Polchinski:2015cea}.
In the language of refs.~\cite{Strominger:2014pwa,Ellis:2016atb,Hawking:2016msc} such data can be identified with  the soft  hair of the BH, whose origin is the infinite number of conservation laws, described by the BMS group.
Proposals for a chaotic dynamics, within a discretized  spacetime,  for the microscopic degrees of freedom of the stretched horizon have been discussed for quite some time in the literature~\cite{Hooft:2016pmw,Brown:2016wib,Banks:1997hz,Iizuka:2008eb,Avery:2011nb,Magan:2016ojb}.

Our contribution to this quest started with  the study of   single particle probes, sent by observers at infinity, in order to learn about  the near horizon AdS$_2$ geometries of  black holes, taken as  discrete and nonlocal dynamical  systems ~\cite{Axenides:2013iwa,Axenides:2015aha,Axenides:2016nmf}. 

More specifically we have shown how the so-called Arnol'd cat maps, acting in a AdS$_2$ discrete near horizon geometry, can capture the properties of its single particle probes.
 We constructed explicitly  an exact discrete version of AdS$_2$/CFT$_1$ correspondence with chaotic and mixing dynamics for Gaussian single-particle wave packets, that is shown to provide an example of the so--called ``Eigenstate Thermalization Hypothesis''~\cite{SrednickiETH}.  Finally, we have demonstrated that the model for  their discrete and chaotic~\cite{Arnold11}, near horizon geometry admits a continuum limit ~\cite{Axenides:2019lea}, where the smooth classical geometry is recovered.

The long term objective of our recent work is to provide models of non-local chaotic quantum  dynamics of the tuneable rate of mixing(and its quantum avatar, scrambling) for  the  degrees of the horizon itself  by $n-$particle systems. Our conjecture is that this can  be achieved through the construction  of the quantum  CMLs  of Arnol'd cat maps~\cite{mantica2019many}.

Therefore, while our previous work focused on the properties of single particle probes of the near horizon geometry,  in the present work,  we construct many-body systems, that possess the necessary features  expected of the interacting black hole microstates themselves--namely, non-locality, chaos and strong mixing (scrambling). Therefore these many-body systems can be considered as effective models of the dynamics of the near horizon geometry itself.  

Below we present the plan and summarize the results of the paper.

In section~\ref{SymplAutom} we present the general setting of the dynamics of  systems of $n$ particles with evolution maps that are  integral toral automorphisms of the  $2 n$ dimensional  phase space,$\mathbb{T}^{2n}$ i.e elements of the symplectic group  Sp$_{2n}[\mathbb{Z}],$ acting on points of  the torus $\mathbb{T}^{2n}$ of radius $R\equiv 1\,\mathrm{mod}\,1.$ 

The completely chaotic and mixing dynamics is described by the maximally hyperbolic elements of this group i.e. whose  eigenvalues are pairs of positive real numbers,$(\lambda>1,1/\lambda<1)$, thus decomposing the phase space into symplectic planes with hyperbolic motion~\cite{Sinai1994,zaslavsky2008,mcduff2017introduction}. 

In section~\ref{symplfibseq}  we discuss  how to obtain elements of Sp$_{2n}[\mathbb{Z}],$which describe $n$ coupled Arnold cat maps that are maximally hyperbolic.
 Starting from the most  general way for linearly coupling   Fibonacci integer sequences, we  construct  a family  of coupled Arnold cat maps lattices (CACML), in any dimension  $d=1,2,..$  imposing  symplectic interactions of  tuneable non-locality and we show that they are all  maximally  hyperbolic toral automorphisms.

In section~\ref{eom} for  the case of translation invariant couplings, in $d=1,$ we determine explicitly all the orbits of the CACML (periodic and non periodic).

In section~\ref{Lyap}  we  discuss further measures of STC, namely the Lyapunov spectra and the  Kolmogorov-Sinai entropy and find that the latter  scales as the volume of the system.  This property is a significant check of the consistency of our calculations and shows that the CACML defined in this way does have sensible thermodynamics.
We observe also that the Kolmogorov-Sinai entropy is a good  proxy  for the mixing time(scrambling)  of the dynamical system: The bigger the K-S entropy the faster the mixing time, so tuning the K-S entropy with the parameters of the system we can tune its mixing time. 
The corresponding property for the quantum system pertains to the entanglement entropy of subsystems and its time evolution  (cf. also~\cite{Bianchi:2017kgb,de2022linear}). 

We discuss  the dependence of the Lyapunov spectra on the dimensionality,the size of the system, $n,$ the strength $G,$ and the range, $l,$ of the interactions.

In section~\ref{permodN}  we   determine  the periods of the  periodic orbits.

To do that we discretize the toroidal phase space by considering all the initial conditions, which are  rational numbers with a common demoninator $N.$
This new phase space we call  $\mathbb{T}^{2n}[N].$
In this--discrete--phase space the toral automorphisms are elements of the group  Sp$_{2n}[\mathbb{Z}_N].$
The  set of all periodic orbits of the corresponding   dynamical systems on the continuous phase space $\mathbb{T}^{2n}[\mathbb{R}],$ are given by the set of all different orbits of the CACML in    $\mathbb{T}^{2n}[N]$   mod  $N,$ by considering all possible values of $N.$

The spectra of the  periods $T[N]$ of the CACML  are the lengths of its orbits and they are random functions of $N.$ They are determined by properties of the  matrix Fibonacci polynomials mod $N.$

We study  numerically  the spectrum  of the periods, for fixed values of the number $n$  of coupled Arnol'd cat maps, the modular integer $N,$ the strength and the range of the interactions.
We observe, as might be  expected, a  random and stronger dependence on $N$  for larger   values of $n,$as well for increasing values of the strength and the range of interactions.

In section~\ref{concl} we present our  conclusions, possible applications, as well as  open problems.

In appendix~\ref{appkfibseq} we review, for completeness, some useful properties of the Fibonacci polynomials and their matrix generalizations and in, appendix~\ref{conslawsmodN} we determine and discuss possible conserved quantities  of the ACML systems,  which can be expressed as  quadratic functions of the position and momenta of the system.The corresponding  conservation laws restrict the volume of the toroidal phase space available to the trajectories of the system  and  lead to the vanishing of some of the Lyapunov exponents of the system (namely through eigenvectors of the evolution operator with eigenvalue one).

\newpage
\section{Dynamics of  symplectic automorphisms of the toroidal phase space $\mathbb{T}^{2n}$ }\label{SymplAutom}

In this section we provide a short review of the mathematical tools, that are necessary for studying the dynamics of $n$ particles, in a  toroidal phase space, $\mathbb{T}^{2n}=\mathbb{R}^{2n}/(\mathbb{Z}^n\times\mathbb{Z}^n)$. The discrete time evolution  will be described by discrete time  maps, ${\sf M},$ that are elements  of the symplectic group, $\mathrm{Sp}_{2n}[\mathbb{Z}]$, that act on the toroidal phase space as
\begin{equation}
\label{TPSevol}
\bm{x}_{m+1}\equiv\bm{x}_m{\sf M}\,\mathrm{mod}\,1
\end{equation}
at the $m$--th time step ($m=0,1,2,\ldots$), 
along with the initial condition $\bm{x}_{m=0}=\bm{x}_0\in\mathbb{T}^{2n}.$ We have taken the length of the sides  of the torus equal to 1.

In this notation, $\bm{x}_m=(\bm{q}_m,\bm{p}_m)$, where $\bm{q}_m$ and $\bm{p}_m$ are the positions and the momenta of the $n$ particles at time step $m$. 

By definition any element ${\sf M}\in\mathrm{Sp}_{2n}[\mathbb{Z}]$ preserves the (symplectic) inner product, $\langle\bm{x}',\bm{x}\rangle$  of any two vectors $\bm{x}$ and $\bm{x}'$,
\begin{equation}
\label{symplinprod}
\langle \bm{x}',\bm{x}\rangle=\sum_{I=1}^n\,\left( q_Ip'_I-q'_Ip_I\right)
\end{equation}
This inner product can be rewritten as
\begin{equation}
\label{innerprod}
\left\langle \bm{x}', \bm{x}\right\rangle={{\bm x}'}^\mathrm{T}{\sf J}\bm{x}
\end{equation}
where ${\sf J}$ is the symplectic matrix
\begin{equation}
\label{Jform}
{\sf J} = \left(
\begin{array}{cc} 0 & -I_{n\times n}\\ I_{n\times n} & 0\end{array}
\right) 
\end{equation}
We can decompose ${\sf M}$ in blocks of $n\times n$ (integer) matrices
\begin{equation}
\label{Mblocks}
{\sf M}=\left( 
\begin{array}{cc} 
{\sf A} & {\sf B} \\ {\sf C} & {\sf D}
\end{array}
\right)
\end{equation}
The invariance of ${\sf J}$ under the action of any element ${\sf M}\in\mathrm{Sp}_{2n}[\mathbb{Z}]$, 
\begin{equation}
\label{symplecticaction}
{\sf J} = {\sf M}^\mathrm{T}{\sf J}{\sf M}
\end{equation} 
implies the constraints
\begin{equation}
\label{symplconstraints}
\begin{array}{l}
{\sf A}^\mathrm{T}{\sf D}-{\sf C}^\mathrm{T}{\sf B} = I_{n\times n}\\
{\sf A}^\mathrm{T}{\sf C}={\sf C}^\mathrm{T}{\sf A}\\
{\sf B}^\mathrm{T}{\sf D}={\sf D}^\mathrm{T}{\sf B}
\end{array}
\end{equation}
The second and third constraints express that ${\sf A}^\mathrm{T}{\sf C}$ and ${\sf B}^\mathrm{T}{\sf D}$ are symmetric, integer--valued, matrices. 

To visualize the motion, in the toroidal phase space, $\mathbb{T}^{2n},$ under the action of ${\sf M},$ it is useful to decompose it into ``simpler'' actions. For it is possible to show that any matrix ${\sf M}\in \mathrm{Sp}_{2n}[\mathbb{Z}]$ can be decomposed into the product of three, simpler, symplectic matrices, that generate the symplectic group, namely
\begin{equation}
\label{sympldecomp}
{\sf M}=\left(\begin{array}{cc} I_{n\times n} & 0 \\ {\sf S}_\mathrm{L} & I_{n\times n}\end{array}\right)
\left(\begin{array}{cc} {\sf U}^\mathrm{T} & 0 \\ 0 & {\sf U}^{-1}\end{array}\right)
\left(\begin{array}{cc} I_{n\times n} & {\sf S}_\mathrm{R} \\ 0 & I_{n\times n}\end{array}\right)
\end{equation}	
where ${\sf S}_\mathrm{R,L}$ are  integer symmetric matrices and ${\sf U}$ an  invertible, integer matrix, which can be determined, given the matrices ${\sf A,B,C,D}$, as follows:
\begin{equation}
\label{ABCDrels}
\begin{array}{l}
{\sf U}={\sf A}^\mathrm{T}\\
{\sf S}_\mathrm{L} = {\sf C}{\sf A}^{-1}\\
{\sf S}_\mathrm{R}={\sf A}^{-1}{\sf B}\\
\end{array}
\end{equation}
These relations hold, iff ${\sf A}$ is invertible, with integer entries. If this isn't the case, it is possible to redefine ${\sf M}$ so that ${\sf A}$ does have the desired properties. 

This decomposition will be useful, also, for the study of the quantum mechanics of this system, since the properties of the unitary evolution operator of the latter are, indeed, those of the metaplectic  representation of the symplectic group~\cite{Athanasiu:1995ni,Axenides:2013iwa}. 

An explicit example is that of  the single Arnol'd cat map ($n=1$), which can be written as 
\begin{equation}
\label{ACMsympldecomp}
{\sf M}=\left(\begin{array}{cc} 1 & 1\\1 & 2\end{array}\right)=\left(\begin{array}{cc} 1&0 \\1 & 1\end{array}\right)1_{2\times 2}\left(\begin{array}{cc}1 & 1 \\0 & 1\end{array}\right)
\end{equation}
and the exercise we shall solve in the following sections is  to find the generalization for $n$ such maps. 

Let us now consider the  initial condition, $\bm{x}_0=(k_1/N,k_2/N,\ldots,k_n/N,l_1/N,l_2/N,\ldots,l_n/N)$, with $0\leq k_I \leq N, 0\leq l_I \leq N$ and $k_I, l_I, N$ integers. 
Since the length of each side of the hypercube that defines $\mathbb{T}^{2n}$ is taken equal to 1, the evolution equation~(\ref{TPSevol}) can be rewritten as
\begin{equation}
\label{TPSevol1}
\left(\bm{k},\bm{l}\right)_{m+1}=\left(\bm{k},\bm{l}\right)_m{\sf M}\,\mathrm{mod}\,N
\end{equation} 

The set of vectors $(\bm{k},\bm{l})\,\mathrm{mod}\,N,$ defines the lattice $\mathbb{Z}_N^n\times\mathbb{Z}_N^n$ which can be identified with $\mathbb{T}^{2n}[N].$  
Since the number of rational points of $\mathbb{T}^{2n}$, with fixed denominator $N$, is equal to $N^{2n}$, the matrix ${\sf M}\,\mathrm{mod}\,N$, which belongs to the finite group Sp$_{2n}[\mathbb{Z}_N]$, has a finite period, $T[N]$, where $T[N]$ is the smallest integer such that ${\sf M}^{T[N]}\equiv I_{n\times n}\,\mathrm{mod}\,N$. 

 This period does not depend, generically, on the initial condition and, thus, defines the length of the corresponding periodic orbit. It does depend ``randomly'' on $N$ and, for large $N$, there exist ``short'' periodic orbits, for which $T[N]<<N.$ It is known, for the case $n=1$~\cite{falk_dyson} and ${\sf M}$ the Arnol'd cat map, that the smallest period corresponds to  $N$ a Fibonacci integer, $f_q$. In this case, $T[f_q]=2q$. 

What is significant about the period, $T[N]$, of the--hyperbolic--map, ${\sf M},$ is that it controls the rate of ``spreading'', with time,  of a localized distribution of initial conditions and, due to the compactness of the phase space, the rate of mixing~\cite{falk_dyson}. Although, for finite $N$, there isn't any rigorous notion of ergodicity and of mixing, we can get important information, for reasonably large values of $N$, about the mixing time, by considering evolution times equal to half of the period. In the case of the Arnol'd cat map, for $N$ a Fibonacci integer, the mixing time, $t_\mathrm{mixing}$ scales as $\log\,N$. 

It is this property that implies that the dynamics of the Arnol'd cat map is chaotic and that it is possible to acquire information about the chaotic behavior 
  by studying  orbits whose period takes  ``large'' values. The reason is that it is known that  chaotic orbits, whether for Hamiltonian or non--Hamiltonian systems, essentially can be identified as unstable periodic orbits of {\em infinite} period~\cite{cvitanovic2005chaos,gutkin:2019sca}. For chaotic systems, described by the dynamics of symplectic linear maps, i.e. elements of Sp$_{2n}[\mathbb{Z}]$ and phase space $\mathbb{T}^{2n}$, these linear maps must be hyperbolic, i.e. their eigenvalues must be  real and positive. 

One consequence of the above  is that  these eigenvalues come in pairs, $(\lambda,1/\lambda)$, with $\lambda > 1$. These pairs define planes in the $2n-$dimensional phase space, spanned by the corresponding eigenvectors, where the flow expands along the eigenvector, corresponding to the eigenvalue $\lambda$ and contracts along the eigenvector, corresponding to the eigenvalue $1/\lambda$.

Closing this section we recall, for completeness, properties of the finite group Sp$_{2n}[\mathbb{Z}_N]$.

The mod $N$ reduction of the symplectic group, $\mathrm{Sp}_{2n}[\mathbb{Z}]$, defines the finite symplectic group, $\mathrm{Sp}_{2n}[\mathbb{Z}_N]$ as follows: 
\begin{equation}
\label{Sp2nmodN}
\mathrm{Sp}_{2n}[\mathbb{Z}_N]=\left\{
{\sf M}\,\mathrm{mod}\,N,\forall {\sf M}\in \mathrm{Sp}_{2n}[\mathbb{Z}]
\right\}
\end{equation}
The mod $N$ reduction is a homomorphism from $\mathrm{Sp}_{2n}[\mathbb{Z}]$ to $\mathrm{Sp}_{2n}[\mathbb{Z}_N]$, with kernel the principal congruent subgroup,
\begin{equation}
\label{GammaN}
\Gamma_\mathrm{sp}[N]=\left\{
{\sf M}\in\mathrm{Sp}_{2n}[\mathbb{Z}]\left| {\sf M}\equiv\,I_{2n\times 2n}\,\mathrm{mod}\,N\right.
\right\}
\end{equation}

We can describe  the decomposition of the finite group $\mathrm{Sp}_{2n}[\mathbb{Z}_N]$ into the direct product of finite groups of the form $\mathrm{Sp}_{2n}[\mathbb{Z}_{p^k}]$, where $p$ is a prime and $k$ is a positive integer, corresponding to the decomposition in prime factors of $N$
\begin{equation}
\label{SpNpi}
N=\prod_{l=1}^L\,p_l^{k_l}
\end{equation}
Indeed, using the Chinese Remainder Theorem~\cite{Athanasiu:1998cq}, we can show that
\begin{equation}
\label{ChRemThSp}
\mathrm{Sp}_{2n}[\mathbb{Z}_N]=
\bigotimes_{l=1}^L\,\mathrm{Sp}_{2n}\left[\mathbb{Z}_{p_l^{k_l}}\right]
\end{equation}
This allows to obtain the order of the group, $\mathrm{Sp}_{2n}[\mathbb{Z}_N]$, since the order of each term of this decomposition is known, namely~\cite{sp2norder},
\begin{equation}
\label{orderSp2np}
\mathrm{ord}\left(\mathrm{Sp}_{2n}[\mathbb{Z}_{p^k}]\right)=p^{ (2k-1)n^2+(k-1)n }\prod_{i=1}^n\,\left(p^{2i}-1\right)
\end{equation}

.
\newpage
\section{Interacting Arnol'd cat maps from symplectic  couplings of $n$, $k-$Fibonacci sequences}\label{symplfibseq} 
In the previous section we defined  evolution operators $\mathrm{Sp}_{2n}[\mathbb{Z}]\ni{\sf M}:\mathbb{T}^{2n}\to\mathbb{T}^{2n},$ that act on the points of the $2n-$torus.  These evolution operators describe the dynamics in phase space of $n$ oscillators, each defined on a lattice of $n$ sites. However this construction doesn't show how the oscillators are actually coupled. So we must show how it is possible to obtain the evolution operator of $n$ oscillators, in terms of the evolution operator of one oscillator, how the phase space of one is embedded in the phase space of all. 

This is the subject of the present section. 
 
We shall show how to couple $n$ Arnol'd cat maps. We know that the evolution operator for each is an element of Sp$_2[\mathbb{Z}],$ the set of rational points of a 2-torus, $\mathbb{T}^2.$  The total phase space of the system will be $\mathbb{T}^{2n}$, the $2n-$dimensional torus and the proposed dynamics will be described by appropriate elements of the symplectic group, Sp$_{2n}[\mathbb{Z}]$. 

The idea is to exploit the known  correspondence between the Arnol'd cat map and the Fibonacci sequence and describe the coupling between the Arnol'd cat maps by the coupling between the sequences, so that the evolution operators of $n$ coupled Arnol'd cat maps can be understood as  iteration matrices of $n$ coupled {\em generalized} Fibonacci sequences.

Coupled Fibonacci sequences have been considered in the literature, for instance in~\cite{atanassov1985new,singh2010coupled,rathore2012generalized}. However, in these papers the possible applications to Hamiltonian dynamics  were not the topic of interest and, moreover, the corresponding maps were not symplectic.

In the literature it has been a matter of debate, how to couple together Arnol'd cat maps. What we propose in this paper is to define the coupling between Arnol'd cat maps, through the coupling between generalized Fibonacci sequences, by requiring that the resulting map be symplectic. 

To understand how this works, let us show how this works for two maps. 

We shall {\em define} the coupling between two  $n=2$  cat maps using the coupling between    $n=2$ {\em generalized}  Fibonacci sequences, $\{f_m\}$ and $\{g_m\}$ (but we write the expressions in a way that generalizes immediately to arbitrary $n$), as follows:
\begin{equation}
\label{Fibonacci}
\begin{array}{l}
\displaystyle
f_{m+1}=a_1f_m+b_1f_{m-1}+c_1g_m+d_1g_{m-1}\\
\displaystyle
g_{m+1}=a_2g_m+b_2g_{m-1}+c_2f_m+d_2f_{m-1}\\
\end{array}
\end{equation}
where the $a_i,b_i,c_i,d_i$ are integers,  $f_0=0=g_0$ and $f_1=1=g_1$ are the initial conditions and $m=1,2,3,\ldots$. A byproduct of our analysis will be how to  define the coupling between  $k-$Fibonacci sequences (in particular the case $k=1,$ which corresponds to the case of two Arnol'd cat maps).

Based on the properties of symplectic matrices, that we reviewed in the previous section, we write these equations in the following  matrix form 
\begin{equation}
\label{2Fibiter}
X_{m+1}\equiv
\left(\begin{array}{l} f_m\\g_m\\f_{m+1}\\g_{m+1}\end{array}\right)=
\left(\begin{array}{cccc}
0 & 0 & 1 &0\\
0 & 0 & 0 &1\\
b_1 & d_1 & a_1 & c_1 \\
d_2 & b_2 & c_2 & a_2
\end{array}
\right)
\underbrace{
\left(\begin{array}{l} f_{m-1}\\g_{m-1}\\f_{m}\\g_{m}\end{array}\right)}_{X_m}
\end{equation}
In this expression we now focus on  the 2$\times$2 matrices
\begin{equation}
\label{blocks}
\begin{array}{ccc}
{\sf D}\equiv\left(\begin{array}{cc} b_1 & d_1 \\ d_2 & b_2 \end{array}\right) & & 
{\sf C}\equiv\left(\begin{array}{cc} a_1 & c_1 \\ c_2 & a_2\end{array}\right)
\end{array}
\end{equation}
in terms of which the one--time--step evolution equation~(\ref{2Fibiter})  can be written in block form as 
\begin{equation}
\label{block1}
X_{m+1}=
\left(\begin{array}{cc} 0_{n\times n} & I_{n\times n} \\ {\sf D} & {\sf C}\end{array}\right) X_m
\end{equation}
In analogy with the case of a single Fibonacci sequence and its relation with the Arnol'd cat map, we impose the constraint (cf. ~(\ref{nonsymplA})) 
\begin{equation}
\label{symplectic1}
\left(\begin{array}{cc} 0_{n\times n} & I_{n\times n} \\ {\sf D} & {\sf C}\end{array}\right)^\mathrm{T}
{\sf J}
\left(\begin{array}{cc} 0_{n\times n} & I_{n\times n} \\ {\sf D} & {\sf C}\end{array}\right) = -{\sf J}
\end{equation}
This condition implies that 
\begin{equation}
\label{solsAB}
\begin{array}{ccc}
{\sf D} = I_{n\times n} & & 
{\sf C} = {\sf C}^\mathrm{T}
\end{array}
\end{equation}
Therefore $a_1=k_1$, $a_2=k_2$, $c_1 = c_2 = c$.

 In terms of these parameters, the recursion relations take the form
\begin{equation}
\label{New2Fib}
\begin{array}{l}
\displaystyle
f_{m+1}=k_1f_m+f_{m-1}+cg_m\\
\displaystyle
g_{m+1}=k_2g_m+g_{m-1}+cf_m\\
\end{array}
\end{equation}
and can be identified as describing a particular  coupling between a $k_1-$ and a $k_2-$Fibonacci sequence (cf. appendix~\ref{appkfibseq}). This particular coupling is determined by the condition that the square of the evolution matrix is an element of Sp$_4[\mathbb{Z}]$:
\begin{equation}
\label{Msquareevol}
{\sf A}=\left(\begin{array}{cc} 
0 & 1 \\ 1 & {\sf C}
\end{array}\right)\Rightarrow
{\sf M}={\sf A}^2=
\left(\begin{array}{cc} 
1 & {\sf C} \\ {\sf C} & 1+{\sf C}^2
\end{array}\right)
\end{equation}

The role of the coupling is played by the integer $c$.
In components: 
\begin{equation}
\left(\begin{array}{cccc}
0 & 0 & 1 &0\\
0 & 0 & 0 &1\\
1 & 0 & k_1 & c \\
0 & 1 & c & k_2
\end{array}
\right)^2 =
\left(\begin{array}{cccc}
1 & 0 & k_1 & c \\
 0 & 1 & c & k_2 \\
 k_1 & c & c^2+k_1^2+1 & c (k_1+ k_2) \\
 c & k_2 & c (k_1+k_2) & c^2+k_2^2+1 \\
\end{array}\right)
\end{equation}
and represents the discrete time evolution matrix for the coupling of two ``generalized'' Arnol'd cat maps.  

We remark that, if $c=0$ and $k_1=k_2=1,$ we recover two, decoupled, Arnol'd cat maps; if $c\neq 0,$ and $k_1=k_2=1$ we can, thereby, identify two ``coupled'' Arnol'd cat maps, while, if $k_1=k_2=k$, the system decouples into two, independent, $(k+c)$, resp. $(k-c)$ cat maps, for $f_m\pm g_m.$

The generalization to $n$ cat maps proceeds as follows:  We choose two diagonal matrices, of positive integers, ${\sf K}_{IJ}=K_I\delta_{IJ}$ and ${\sf G}_{IJ}=G_I\delta_{IJ}$, with $I,J=1,2,\ldots,n$. If $K_I=1,$ the cat maps are Arnol'd cat maps, with their coupling defined by the vector $G_I.$

One way to write the coupling between the maps is, once more, to work with the (generalized) Fibonacci sequences. To this end, we define the translation operator, ${\sf P}$ and, for $n>2,$ we can distinguish between a ``closed'' and an ``open''  chain, of maps, by  defining ${\sf P}_{I,J}=\delta_{I-1,J}\,\mathrm{mod}\,n$ for the former (and setting ${\sf P}_{1,n}=0={\sf P}_{n,1}$ for the latter).   The periodicity is expressed by the fact that 
${\sf P}^n=I_{n\times n}$. 
Moreover, ${\sf P}$ is orthogonal, since ${\sf P}{\sf P}^\mathrm{T}=I_{n\times n}$.  

Now we can define the coupling matrix for  $n$  sequences as
\begin{equation}
\label{evolAn}
{\sf C} = {\sf K} + {\sf P}{\sf G} + {\sf G}{\sf P}^\mathrm{T}
\end{equation}

The corresponding $2n\times 2n$ evolution matrix, ${\sf A}$ is given by 
\begin{equation}
\label{Aevol2n}
{\sf A} = \left(\begin{array}{cc} 
0_{n\times n} & I_{n\times n} \\ I_{n\times n} & {\sf C}
\end{array}\right)
\end{equation}
and satisfies the relation ${\sf A}^\mathrm{T}{\sf J}{\sf A}=-{\sf J}$.
Its square, 
\begin{equation}
\label{Mevol2n}
{\sf M}={\sf A}^2=\left(\begin{array}{cc} 
I_{n\times n} & {\sf C} \\ {\sf C} & I_{n\times n}+{\sf C}^2
\end{array}\right)
\end{equation}
This satisfies the relation  ${\sf M}^\mathrm{T}{\sf J}{\sf M}={\sf J}$, showing that 
${\sf M}\in\mathrm{Sp}_{2n}[\mathbb{Z}]$. 

Since ${\sf A}$ is symmetric, (from the property that ${\sf C}={\sf C}^\mathrm{T}$), ${\sf M}$ is positive definite and its eigenvalues come in pairs, $(\lambda,1/\lambda)$, with $\lambda > 1$. This property implies that, for all matrices ${\sf K}$ and ${\sf G}$ this system of coupled maps is hyperbolic. 

It is possible to decompose the classical evolution matrix ${\sf M}$  in terms of the generators of the symplectic group~(\ref{sympldecomp})
\begin{equation}
\label{Mdecomp}
{\sf M} = \left(\begin{array}{cc} I_{n\times n} & 0_{n\times n} \\ {\sf C} & I_{n\times n}\end{array}\right)\left(\begin{array}{cc} I_{n\times n} & {\sf C} \\ 0_{n\times n} & I_{n\times n}\end{array}\right)
\end{equation} 
Moreover each factor generates, for any symmetric, integer, matrix ${\sf C}$, an abelian subgroup of $\mathrm{Sp}_{2n}[\mathbb{Z}]$. These factors are called ``left'' (resp. ``right'') translations.

Eq.~(\ref{Mevol2n}) is, indeed, a key result of  our paper, since it shows how the evolution operator of $n$ cat maps is defined from the evolution operator of the individual maps, in a way consistent with symplectic covariance.

If ${\sf C}=I_{n\times n},$ we have $n$ decoupled Arnol'd cat maps, while the off-diagonal elements of ${\sf C}$ describe their interaction. If ${\sf C}$ is diagonal, we have decoupled cat maps and the band structure of ${\sf C}$ encodes the (non-)locality of the interactions. (This will become clearer when we shall discuss the dynamics in configuration space.)

An important special case arises if we impose  translation invariance along the chain of maps, i.e.  $K_I=K$ and $G_I = G$ for all $I=1,2,\ldots,n$. 

As an  example of the translation invariant closed chain, we present below  the matrix ${\sf C}$, for $n=3$:
\begin{equation}
\label{C4}
{\sf C}=\left(\begin{array}{ccc}
K & G & G\\
G & K  & G \\
G &  G & K
\end{array}
\right)
\end{equation}
The corresponding evolution map, ${\sf A}$, that describes three, coupled, $K-$Arnol'd cat maps, is a $6\times 6$ matrix, given by the expression
\begin{equation}
\label{Aarnold3}
{\sf A} = \left(\begin{array}{cccccc}
 0 & 0 & 0  & 1 & 0 & 0 \\
 0 & 0 & 0  & 0 & 1 & 0 \\
 0 & 0 & 0  & 0 & 0 & 1  \\
 1 & 0 & 0 & K & G &  G \\
 0 & 1 & 0  & G & K & G \\
 0 & 0 & 1  & G & G & K \\
\end{array}\right)
\end{equation} 
and the symplectic map, ${\sf M}={\sf A}^2$, of three, coupled $K-$Arnol'd cat maps, is given by the expression 
\begin{equation}
\label{Marnold3}
{\sf M}=\left(
\begin{array}{cccccc}
 1 & 0 & 0 & K & G & G \\
 0 & 1 & 0 & G & K & G \\
 0 & 0 & 1 & G & G & K \\
 K & G & G & 2 G^2+K^2+1 & G^2+2 G K &
   G^2+2 G K \\
 G & K & G & G^2+2 G K & 2 G^2+K^2+1 &
   G^2+2 G K \\
 G & G & K & G^2+2 G K & G^2+2 G K & 2
   G^2+K^2+1 \\
\end{array}
\right)
\end{equation}
It is interesting that the coupling $G$ appears, not only, in the off-diagonal $3\times 3$ blocks, but, also, in the diagonal elements of the lower block. On the other hand, setting $G=0$ we recover the case of three, decoupled ``$k-$Arnol'd''  cat maps. 

Let us now consider the case of the open chain. The only change involves the operator ${\sf P}$, which, now, must be defined as ${\sf P}_{IJ}=\delta_{I-1,J}$, for $I,J=1,2,\ldots,n$. Due to the absence of the mod $n$ operation, the ``far non--diagonal'' (upper right and lower left) elements are, now, zero. This express the property that the $n-$th Fibonacci is not coupled to the first one (and vice versa). 

For both, closed or open, chains, we observe certain algebraic properties of the evolution matrix, ${\sf A}$. 

The $k-$Fibonacci sequence has the important property that the elements of the matrix  ${\sf A}(k)^m$ are arranged in columns of consecutive pairs of the sequence. We shall show that this property can be generalized for $n$ interacting $k-$Fibonacci sequences as follows:

\begin{theorem}
The $m-$th power of the evolution matrix, ${\sf A}$ (cf. eq.~(\ref{Aevol2n})) can be written as 
\begin{equation}
\label{Amevol2n}
{\sf A}^m=\left(\begin{array}{cc} {\sf C}_{m-1} & {\sf C}_m\\ {\sf C}_m & {\sf C}_{m+1}\end{array}\right) 
\end{equation}
where ${\sf C}_0=0_{n\times n}$, ${\sf C}_1=I_{n\times n}$ and ${\sf C}_{m+1}={\sf C}{\sf C}_m+{\sf C}_{m-1}$, with $m=1,2,3,\ldots$.   
This matrix recursion relation generalizes to matrices the $k-$Fibonacci sequence for numbers, holds for any  matrix, ${\sf C}$ and, in particular for the (symmetric, integer) matrix ${\sf C}$, defined by eq.~(\ref{evolAn}).
\end{theorem}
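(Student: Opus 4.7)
The plan is to prove the formula by induction on $m$, using block multiplication of $2n \times 2n$ matrices partitioned into $n \times n$ blocks. The structure of ${\sf A} = \begin{pmatrix} 0_{n\times n} & I_{n\times n} \\ I_{n\times n} & {\sf C}\end{pmatrix}$ is simple enough that a direct computation at each step should suffice; there is no deep obstacle, only a choice to make about which side to multiply on.

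For the base case, I would check $m=1$ directly against the definition: the proposed form gives ${\sf A}^1 = \begin{pmatrix} {\sf C}_0 & {\sf C}_1 \\ {\sf C}_1 & {\sf C}_2\end{pmatrix} = \begin{pmatrix} 0_{n\times n} & I_{n\times n} \\ I_{n\times n} & {\sf C}\end{pmatrix}$, using ${\sf C}_2 = {\sf C}\,{\sf C}_1 + {\sf C}_0 = {\sf C}$, which coincides with ${\sf A}$ itself.

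For the inductive step, assuming the formula for $m$, I would compute ${\sf A}^{m+1} = {\sf A}\cdot {\sf A}^m$ (rather than ${\sf A}^m \cdot {\sf A}$), because left multiplication avoids having to commute ${\sf C}_m$ past ${\sf C}$:
\begin{equation*}
{\sf A}\,{\sf A}^m = \begin{pmatrix} 0 & I \\ I & {\sf C}\end{pmatrix}\begin{pmatrix} {\sf C}_{m-1} & {\sf C}_m \\ {\sf C}_m & {\sf C}_{m+1}\end{pmatrix} = \begin{pmatrix} {\sf C}_m & {\sf C}_{m+1} \\ {\sf C}_{m-1} + {\sf C}\,{\sf C}_m & {\sf C}_m + {\sf C}\,{\sf C}_{m+1}\end{pmatrix}.
\end{equation*}
Applying the matrix recursion ${\sf C}_{k+1} = {\sf C}\,{\sf C}_k + {\sf C}_{k-1}$ once with $k=m$ and once with $k=m+1$ collapses the bottom row to ${\sf C}_{m+1}$ and ${\sf C}_{m+2}$, giving exactly the claimed form with $m$ replaced by $m+1$.

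The step I would flag as the only mild subtlety is the choice of multiplication order. Multiplying on the right would produce the entry ${\sf C}_{m-1} + {\sf C}_m\,{\sf C}$, which matches ${\sf C}_{m+1}$ only after invoking $[{\sf C}_m,{\sf C}]=0$. That commutativity does hold, since a trivial secondary induction shows ${\sf C}_m$ is a polynomial in ${\sf C}$ with integer coefficients (the $k$-Fibonacci polynomial evaluated at ${\sf C}$), but the left-multiplication route bypasses it and keeps the argument self-contained. Finally, the specific claim for ${\sf C} = {\sf K} + {\sf P}{\sf G} + {\sf G}{\sf P}^\mathrm{T}$ of eq.~(\ref{evolAn}) follows immediately, since the induction used no special property of ${\sf C}$ beyond its being a fixed $n\times n$ matrix.
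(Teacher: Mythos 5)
Your proof is correct and follows essentially the same route as the paper, which also argues by induction on $m$ with base case $m=1$ and inductive step ${\sf A}^{m+1}={\sf A}\cdot{\sf A}^m$ (the paper simply leaves the block multiplication implicit, which you carry out explicitly). Your remark on the multiplication order is sound --- left multiplication matches the recursion ${\sf C}_{m+1}={\sf C}\,{\sf C}_m+{\sf C}_{m-1}$ directly, while the right-multiplied variant would need $[{\sf C}_m,{\sf C}]=0$, true but unnecessary --- and it is a worthwhile detail the paper's one-line proof glosses over.
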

\begin{proof}
The proof is by induction. For $m=1$ it is true, by definition. If we assume it holds for $m>1$, then, by the relation ${\sf A}^{m+1}={\sf A}\cdot{\sf A}^m$, we immediately establish that it holds for $m+1$. 
\end{proof}
It is straightforward to generalize this construction,  to take into account interactions between next-to--nearest neighbors, and so on. For the case of the closed chain, the most general construction is encoded in the matrix ${\sf C}$. Its definition~(\ref{evolAn}) can be written as 
\begin{equation}
\label{AevolAngen}
{\sf C} = {\sf K}I_{n\times n} + \sum_{l=1}^{[n/2]-1} \left( {\sf P}^l{\sf G}_l + {\sf G}_l\left[{\sf P}^\mathrm{T}\right]^l\right)
\end{equation}
The label $l$ refers to the neighborhood: $l=1$ labels the nearest neighbors, $l=2$ the next-to-nearest neighbors and so on. The farthest neighbors, on the closed chain, are $[n/2]-1$ sites apart. 

The matrices ${\sf G}_l$ are all diagonal, with integer entries and represent the couplings between the maps, at different sites of the lattice. By construction, the matrix ${\sf C}$ is symmetric, therefore the evolution matrix ${\sf M}\in\mathrm{Sp}_{2n}[\mathbb{Z}]$. 

Imposing, once more, translation invariance, the matrices ${\sf G}_l=G_l I_{n\times n}$, therefore, ${\sf C}$ becomes
\begin{equation}
\label{Ctransinv}
{\sf C}= KI_{n\times n} + \sum_{l=1}^{[n/2]-1}\,G_l\left( {\sf P}^l+\left[{\sf P}^\mathrm{T}\right]^l\right)
\end{equation}
The chaotic properties of the corresponding matrix ${\sf M}$, depend strongly on how $G_l$ depends on $l$; namely, whether $G_l$ decreases, is independent of, or increases with $l$. 

As an illustration, we show the matrix ${\sf C}$ for $n=7$, that can describe couplings up to third nearest neighbors:
\begin{equation}
\label{Apinakas7}
{\sf C}=\left(
\begin{array}{ccccccc}
  K & G_1 & G_2 & G_3 & G_3 & G_2 & G_1 \\
   G_1 & K & G_1 & G_2 & G_3 & G_3 & G_2 \\
  G_2 & G_1 & K & G_1 & G_2 & G_3 & G_3 \\
 G_3 & G_2 & G_1 & K & G_1 & G_2 & G_3 \\
 G_3 & G_3 & G_2 & G_1 & K & G_1 & G_2 \\
  G_2 & G_3 & G_3 & G_2 & G_1 & K & G_1 \\
 G_1 & G_2 & G_3 & G_3 & G_2 & G_1 & K \\
\end{array}\right)
\end{equation}
Here  we have assumed translation invariance,  ${\sf K}=K I_{n\times n}$ and we remark that all pairs of symmetric diagonals contain identical elements, $G_1, G_2, \ldots,G_{[n/2]-1}=G_3$ (for $n=7$).

Another direction involves considering higher dimensional lattices of coupled Arnol'd cat maps. 

For example  the case of the square lattice (with periodic boundary conditions) can be described in the following way: 

Let $f_m^{(I,J)}$ be the family of sequences, at timestep $m,$ where $I, J=0,1,2,\ldots,n-1$. Therefore we have $n^2$ Fibonacci sequences. The neighbors of the site $(I,J)$ are taken as $(I\pm1,J)$ and $(I,J\pm1)$.  The translation operators, that connect any site with its neighbors, are $P\otimes I_{n\times n}, P^\mathrm{T}\otimes I_{n\times n}, I_{n\times n}\otimes P$ and $I_{n\times n}\otimes P^\mathrm{T}$. It is easy to convince oneself that these translation operators determine the order of the $n^2$ Fibonacci sequences along a vector of length $n^2$. More explicitly, the ordering is the ``lexicographic'' ordering. The second index, $J$,  of $f_n^{(I,J)}$ is the ``fast'' index, while the index $I$ is the ``slow'' index. In row form the ordering is the following: $(f_m^{(0,0)},f_m^{(0,1)},\ldots,f_m^{(0,n-1)},f_m^{(1,0)},f_m^{(1,1)},\ldots,f_m^{(1,n-1)},\ldots,f_m^{(n-1,0)},f_m^{(n-1,1)},\ldots,f_m^{(n-1,n-1)})$.

The corresponding matrix ${\sf C}$, which encodes the couplings between nearest neighbors, contains two,  diagonal, $n^2\times n^2$,  matrices, one of which is 
${\sf K}$, just as for the case of the chain, along with another  matrix ${\sf G}$, which contains all the nearest--neighbor couplings. 

Schematically, 
\begin{equation}
\label{Csqulatt}
{\sf C} = {\sf K} + \left( P\otimes I + I\otimes P\right){\sf G} + {\sf G}\left(P^\mathrm{T}\otimes I + I\otimes P^\mathrm{T}\right)
\end{equation} 
In analogy with the one--dimensional case, interactions involving larger neighborhoods can be described by replacing $P,$ respectively $P^\mathrm{T},$ by $P^l$, respectively $[P^\mathrm{T}]^l.$

 Higher dimensional (hypercubic) lattices of Arnol'd cat maps can be described in the same way.

In summary we have constructed the evolution operator for $n$ Arnol'd cat maps in a way that is consistent with its action as an element of Sp$_{2n}[\mathbb{Z}]$ on the torus $\mathbb{T}^{2n}$ and have shown how it is built up from the evolution operator of the individual maps. 

If the coordinates of the initial condition are rational, then, as we have explained, the mod 1 operation, which expresses the fact that the action takes place on the torus, is replaced by the mod $N$ operation, where $N$ is the least common multiple of the denominators of the coordinates. These symplectic maps are all elements of Sp$_{2n}[\mathbb{Z}]$, since the matrices ${\sf K}$ and ${\sf G}_l$ are all integer--valued. Applying the restriction of the mod $N$ operation, these maps belong to the group Sp$_{2n}[\mathbb{Z}_N]$ and act on the toroidal lattice $\mathbb{T}^{2n}[N]$.  As noted before, all the orbits will be periodic, with period $T[N]$ of the corresponding map ${\sf M}$~(\ref{Mevol2n}). 

The next step in the study of the dynamics of the map, ${\sf M},$ entails computing its spectrum--from which we can deduce the Lyapunov exponents and hence the Kolmogorov--Sinai entropy--and its eigenvectors. This calculation is facilitated by studying the equations of motion in configuration space, where, indeed, locality makes more sense than in phase space.

In the next section, therefore, we shall construct, explicitly, starting from Hamilton's equations, the corresponding Newton's equations, which describe the discrete time evolution of the position variables, as well as their solutions. 
For   the case of   translation invariant couplings, i.e. $K$ and $G$ constant  (first, for nearest--neighbor interactions,  and,  subsequently, for any range $1 < L \leq [n/2]-1$), they take into account  the degree of locality of the interactions through their dependence on $L,$  the number of interacting neighbors.  $L=1$ means nearest-neighbor interactions and so on.
A particularly striking property of the equations in configuration space is that these  $n$ coupled maps,   describe a system of $n$ coupled inverted harmonic oscillators that don't exhibit  runaway behavior, since this is  ``cured''  by the compactness of the phase space. 

\newpage
\section{From Hamilton's to Newton's discrete time equations}\label{eom}
The classical equations of motion
\begin{equation}
\label{Meom}
\bm{x}_{m+1}=\bm{x}_m{\sf M}
\end{equation}
where $m$ is the iteration timestep of the map, 
can be written in terms of positions, $\bm{q}_m$ and momenta, $\bm{p}_m$ as
\begin{equation}
\label{Meomqp}
\begin{array}{l}
\displaystyle
\left(\bm{q}_{m+1},\bm{p}_{m+1}\right)=\left(\bm{q}_{m},\bm{p}_{m}\right)\left(\begin{array}{cc} 1 & {\sf C}\\ {\sf C} & 1 +{\sf C}^2\end{array}\right)\\
\displaystyle
\bm{q}_{m+1} = \bm{q}_m+ \bm{p}_m {\sf C}\\
\displaystyle
\bm{p}_{m+1} = \bm{q}_m{\sf C} + \bm{p}_m\left(1+{\sf C}^2\right) 
\end{array} 
\end{equation}
Since ${\sf C}$ is symmetric, we can unclutter notation considerably by omitting the ``transpose'' symbol. Henceforth the row vectors are written without it:

We may solve the first of the last two equations for $\bm{p}_m$, 
\begin{equation}
\label{pm}
\bm{p}_m = \bm{q}_{m+1}{\sf C}^{-1}-\bm{q}_m{\sf C}^{-1}\Leftrightarrow 
\bm{p}_{m+1}= 
\bm{q}_{m+2}{\sf C}^{-1}-\bm{q}_{m+1}{\sf C}^{-1}
\end{equation}
and insert the result in the second, in order to obtain a recursion relation for $\bm{q}_m$ only--i.e. Newton's equations of motion:
\begin{equation}
\label{Newtonslaw}
\bm{q}_{m+1}-2\bm{q}_m+\bm{q}_{m-1}= \bm{q}_m{\sf C}^2
\end{equation}
This equation describes the discrete time evolution of $n$ coupled Arnol'd cat maps, using only the coordinates in position space. It highlights that the locality properties of the system are encoded in the band structure of the symmetric matrix ${\sf C}.$

This procedure assumes that ${\sf C}$ is invertible; which fails to hold when ${\sf C}$ has a zeromode, that corresponds to a conserved quantity. 
In this case, the evolution matrix, ${\sf M},$ has an eigenvalue equal to 1 (for each zeromode). This, in turn, implies that, if we choose as initial conditions the zeromode itself, this will not evolve in time.

However we do not need to assume that ${\sf C}$ is invertible, in order to obtain eqs.~(\ref{Newtonslaw}).

In the following, we shall consider the case of the non--zero modes and discuss the zeromodes separately.   

In the subspace of the non-zero modes, the matrix ${\sf C}^2$ is, by construction, positive definite and eq.~(\ref{Newtonslaw})  describes coupled {\em inverted}  harmonic oscillators--the coupling is repulsive. The interest for this system stems from the fact that the phase space of each of these particles is a two--dimensional torus, so the motion is strongly chaotic and mixing (cf. also~\cite{de2022linear}).   

We now decouple the modes of Newton's equations and diagonalize ${\sf C}$ by (finite) Fourier transform, ${\sf F}^\dagger{\sf C}{\sf F}\equiv{\sf D}$ where ${\sf F}_{IJ}=e^{2\pi\mathrm{i}IJ/n}/\sqrt{n}\equiv \omega_n^{IJ}/\sqrt{n}.$ 
We define the mode variable $\bm{r}_m$ by $\bm{q}_m\equiv\bm{r}_m{\sf F}$. The mode variable $\bm{r}_m$ satisfies Newton's equation of motion in the form: 
\begin{equation}
\label{modes}
\bm{r}_{m+1}-2\bm{r}_m+\bm{r}_{m-1} = \bm{r}_m{\sf D}^2
\end{equation} 
where ${\sf D}_{IJ}=\delta_{IJ}D_J$, with 
\begin{equation}
\label{diagonalC}
D_{J}=K + 2G\cos\frac{2\pi J}{n}
\end{equation}
for the case of nearest--neighbor interactions. We note here that, if $n$ is even, then the mode $J_0=n/2$ has zero eigenvalue, when $K=2G.$
If $n$ is odd, on the other hand, a zeromode cannot exist, because $K$ and $G$ are positive integers. 

It is possible to include the case of couplings beyond nearest--neighbors, i.e. $G_l$, with $1< l \leq (n-1)/2$, as follows:
\begin{equation}
\label{longrangeints}
D_J = K + 2\sum_{l=1}^{\frac{n-1}{2}}\,\left(
G_l\cos\frac{2\pi lJ}{n}
\right)
\end{equation} 

We shall now determine the discrete time evolution of the normal modes of the chain, by setting $(\bm{r}_{m})_I=r_{I,m}\equiv \delta_{IJ}\rho_I^m a_J$ (where $I,J=1,2,\ldots,n$).
We duly find a quadratic equation for  $\rho_I$:
\begin{equation}
\label{eigenvaluespropchain}
\rho_I^2-(2+{\sf D}_I^2)\rho_I+1=0\Leftrightarrow\rho_{\pm,I}=\frac{2+{\sf D}_I^2}{2}\pm\frac{|{\sf D}_I|}{2}\sqrt{{\sf D}_I^2+4}
\end{equation}
Thus the general solution, for $\bm{r}_m$ can be written as
\begin{equation}
\label{gensolchain}
\bm{r}_m = \bm{a}_+\rho_+^m + \bm{a}_-\rho_-^m
\end{equation}
and the solution, in terms of $\bm{q}_m$, is 
\begin{equation}
\label{qsol}
\bm{q}_m=\bm{r}_m{\sf F}
\end{equation}
The initial conditions are $(\bm{q}_0,\bm{p}_0)=(\bm{q}_0,\bm{q}_{1}{\sf C}^{-1}-\bm{q}_0{\sf C}^{-1})$  or, equivalently, $(\bm{q}_0,\bm{q}_1)$. 

We can express the coefficients, $\bm{a}_\pm$, in terms of the modes, $\bm{r}_0$ and $\bm{r}_1$ (where $\rho_\pm$ are the diagonal matrices, with elements $\rho_{I,\pm}$)
\begin{equation}
\label{apm}
\begin{array}{l}
\displaystyle
\bm{r}_0 = \bm{a}_+ +\bm{a}_-\\
\displaystyle
\bm{r}_1 = \bm{a}_+\rho_+ + \bm{a}_-\rho_-
\end{array}
\Leftrightarrow
\begin{array}{l}
\displaystyle
\bm{a}_+ =\left(\bm{r}_1- \bm{r}_0\rho_-\right)\left(\rho_+ - \rho_-\right)^{-1}\\
\displaystyle
\bm{a}_-=\left(\bm{r}_0\rho_+ -\bm{r}_1\right)\left(\rho_+  - \rho_-\right)^{-1}
\end{array}
\end{equation}
These equations, of course, only hold for the non-zero mode sector, since they degenerate for the zeromodes. 

For the zeromodes $(r_m)_I=(r_0)_I$ and, similarly, for the corresponding zeromodes of the momenta.  

By defining $\nu_\pm\equiv {\sf F}^\dagger\rho_\pm{\sf F}$, we obtain the exact solution of eq.~(\ref{Newtonslaw}), for the discrete time evolution of  the positions, $\bm{q}_m$ in the form
\begin{equation}
\label{positions}
\bm{q}_m=\left[  \left(\bm{q}_0\nu_+-\bm{q}_1 \right)\nu_+^m + \left(\bm{q}_1-\bm{q}_0\nu_+ \right)\nu_-^m\right]\left( \nu_+-\nu_-\right)
\end{equation}
We notice here that, for initial conditions, $(\bm{q}_0, \bm{q}_1)$ integer vectors, $\bm{q}_m$ will be integer, also, for all times.

This is obvious, since the coupling matrix, ${\sf C}$, appearing in eq.~(\ref{Newtonslaw}) is integer--valued. 

Studying the periodic trajectories, we restricted the initial conditions to be integer vectors--mod $N.$ In order to find, at any time step, $m,$ the position $\bm{q}_m$ inside the torus $\mathbb{T}^{2n}[\mathbb{Z}_N]$ we have to realize mod $N$ reduction in eq.~(\ref{positions}). The modes are coupled, in position space, through the matrices $\nu_\pm$. In components, these read
\begin{equation}
\label{modecoupling}
\left(\nu_\pm\right)_{IJ} =  {\sf F}^\dagger_{IM}\left(\rho_\pm\right)_{MK}{\sf F}_{KJ} = \frac{1}{n}\omega_n^{(J-I)K}\left(\rho_{\pm}\right)_{K}
\end{equation}
since $\rho_\pm$ are diagonal. 

We remark that $\nu_\pm$ are real, symmetric  and positive definite matrices since  $(\rho_\pm)_K=(\rho_\pm)_{n-K}$, for $K=0,1,2,\ldots,n-1$ and their product $\nu_+\nu_-=I_{n\times n}$ , since $\rho_+\rho_-=I_{n\times n}$. 

Having determined explicitly all the periodic orbits of the system, we shall now study the spectrum of the Lyapunov exponents.

\newpage
\newpage
\section{ Tuneable non-locality, Lyapunov spectra and K-S entropy}\label{Lyap}

In this section we shall obtain analytic expressions for the Lyapunov spectra and the Kolmogorov--Sinai entropy, based on the calculations of the previous sections  and we shall discuss
their significance  for   the mixing (scrambling) properties of the ACML chaotic systems.

Let us start with the spectrum of the Lyapunov exponents, $\lambda_I, I=0,1,2,\ldots,n-1$,
which characterize the spatio-temporal chaotic properties of the chain. 

We shall consider two cases:
\begin{enumerate}
\item The case of nearest neighbor (nn)  interactions, {\em viz.} when  $G_l=G$ for $l=1$ and 0 for $l>1.$ 
\item The case of longer range interactions, {\em viz.} when  $G_l=0$, for $l>L$,$L=2,3,..,[(n-1)/2].$
\end{enumerate}

In both cases, the Lyapunov exponents are defined by  cf.~(\ref{eigenvaluespropchain}),
\begin{equation}
\label{Lyapexps}
\lambda_{\pm,I} =\log\rho_{\pm,I}=\log\left\{ \frac{2+{\sf D}_I^2}{2}\pm\frac{|{\sf D}_I|}{2}\sqrt{{\sf D}_I^2+4}\right\}
\end{equation}
In general, the $D_J$ are given by the expression
\begin{equation}
\label{longrangeints1}
D_J = K + 2\sum_{l=1}^{[\frac{n-1}{2}]}\,\left(
G_l\cos\frac{2\pi lJ}{n}
\right)
\end{equation} 
In the first case, typical density  plots for the  spectra are shown in fig.~\ref{lambdakgn}. 
\begin{figure}[thp]
\begin{center}
\includegraphics[scale=0.5]{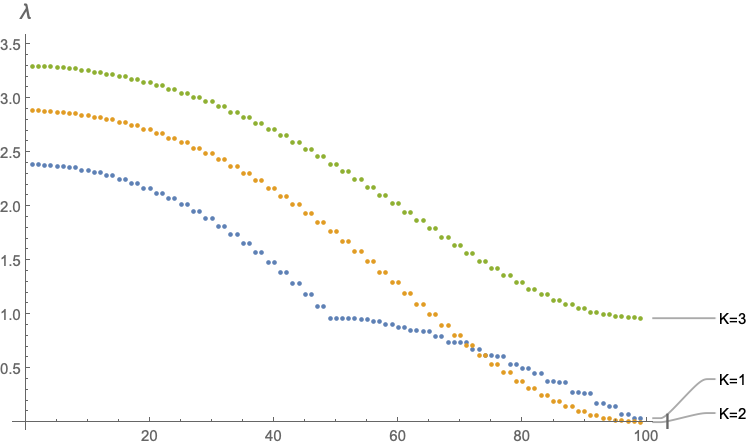}
\end{center}
\caption[]{Histogram of the sorted Lyapunov spectra, of $\lambda_{+,I}^{(K)}$ vs. $I=0,1,2,\ldots,n-1$, for $n=100$, $G=1,$ nearest-neighbor interactions--$L=1$--and $K=1,2,3.$
}
\label{lambdakgn}
\end{figure}

In the second case, we shall consider uniform couplings, {\em viz.} $G_l=G,$ for $l=1$ to $l=L<(n-1)/2$ and $G_l=0$ for $l>L.$

This particular choice is interesting for two reasons: First, we can compute the sum explicitly, {\em viz.} 
\begin{equation}
\label{Deval}
\begin{array}{l}
\displaystyle
D_J^{(L)}=K+2G\frac{\sin(JL\pi/n)}{\sin(J\pi/n)}\cos(\pi(L+1)J/n)
\displaystyle
\end{array}
\end{equation}
for $J=1,2,\ldots,n-1$ (and $1\leq L\leq (n-1)/2$). The case $J=0$ must be treated separately, since $D_0^{(L)}=K+2GL$. A typical example for the density plot  of the spectrum of the Lyapunov exponents in this case, is shown  in fig.~\ref{Lyapspectra}.

\begin{figure}[thp]
\begin{center}
\includegraphics[scale=0.5]{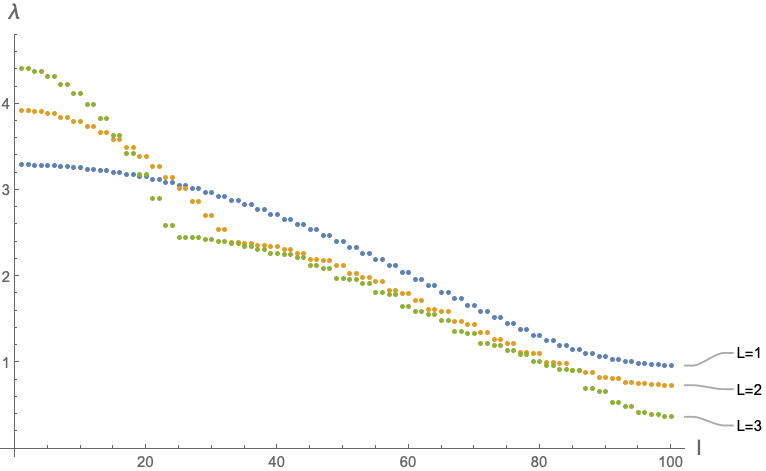}
\end{center}
\caption[]{Histogram of the sorted Lyapunov spectra, $\lambda_{+,I}^{(L)}$, vs. $I=0,1,2,\ldots,n-1$, for uniform couplings, namely, $K=3,G_l=G=1,n=101$ and $L=1,2,3.$
}
\label{Lyapspectra}
\end{figure}
An important consistency check of our calculations is that the sum of the positive Lyapunov exponents, for large values of $n,$ is a linear function of $n.$ 
Indeed this sum can be identified with the rate of entropy production, which is known as the Kolmogorov--Sinai entropy~\cite{pesin1977characteristic,sinai1989dynamical}.
 \begin{equation}
\label{KSentropyeq}
S_\mathrm{KS}=\sum_{I=0}^{n-1}\lambda_{+,I}
\end{equation}
We plot it, for the same values of $K$ and $G$ as above, for nearest--neighbor interactions, $L=1,$ as a function of the length of the chain, from $n=3$ to $n=100$, in fig.~\ref{KSentropy}. We remark that, for small sizes, there are deviations from linear behavior, that takes over at large sizes. 
\begin{figure}[thp]
\begin{center}
\includegraphics[scale=0.6]{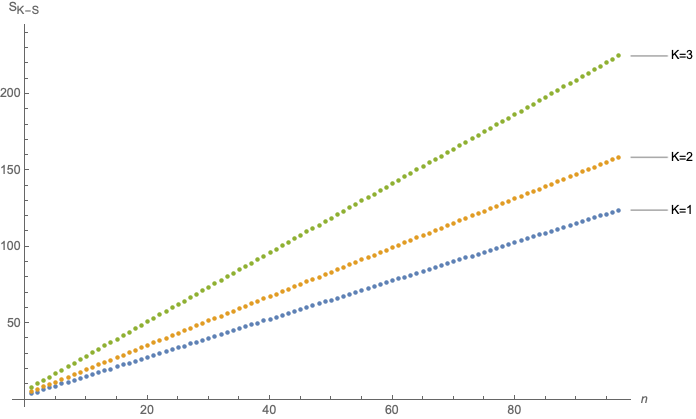}
\end{center}
\caption[]{The Kolmogorov--Sinai entropy, $S_\mathrm{KS}$, of eq.~(\ref{KSentropyeq}) as a function of the chain length, $n=5-101$, for $K=1,2,3$ and $G=1$. We remark deviations from linear behavior at small chain lengths, that become negligible at larger lengths.
}
\label{KSentropy}
\end{figure}
We observe that the slope of the K-S entropy, is an increasing function of $K.$

For longer-range, uniform interactions,  $L=5,$ the corresponding K-S entropy is displayed in fig.~\ref{KSentropyn=100}.
\begin{figure}[thp]
\begin{center}
\includegraphics[scale=0.6]{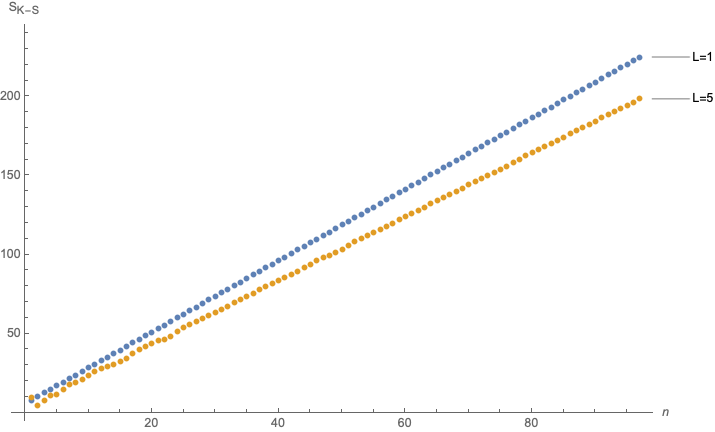}
\end{center}
\caption[]{Kolmogorov--Sinai entropy, for $n=5-101$ Arnol'd cat maps, $K=3,G=1$ with ranges $L=1$  and $L=5.$ 
} 
\label{KSentropyn=100}
\end{figure}

Let us discuss now the importance of the above typical behavior of the Lyapunov spectra and the K-S entropy. In the examples we considered above we notice the following four properties of the ACML systems.

First the increase of the average magnitude of the Lyapunov exponents as functions of the size of the system ,$n$,the coupling constant $G$ and the constant $K$ which plays a role for the mass of the coupled chaotic units, i.e.the Arnold cat maps.

Second the shape(curvature) of the Lyapunov spectra is tuneable, i.e we may control if we have many large or many small Lyapunov exponents or a flat region in the middle.

Third the increase of the slope of the K-S entropy as a function of $K$  of the system, for fixed $G$ and nearest-neighbor interactions. 

 Fourth--somewhat surprisingly--the long range interactions, generically, do {\em not}  increase the average magnitude of the Lyapunov exponents and consequently the K-S entropy--this can be understood as a result of the oscillations in the spectra of the Lyapunov exponents. 
  
 All these four properties are important for the tuneability of the mixing properties of the system.
 
 This system of $n-$coupled chaotic units (ACM) can be proven by general theorems that it is  mixing, because it is ergodic and it has a compact-bounded phase space, the $2 n$ dimensional torus.
 The mixing time of the system is defined as the logarithm of the deviation from the uniform distribution in time $T$ for an initially chosen probability distribution in the phase space,divided by the time $T,$ in the large $T$ limit. 
 
The problem of the calculation of the mixing time    is an interesting  exercise, whose solution, for the present system, will be described in detail in a future publication~\cite{AFNmixing}.
For the case of $n=1$ it is equal to the 1/logarithm of the golden ratio~\cite{arnold1968ergodic}. For $n>1$ it is expected to be proportional to $1/S_\mathrm{K-S}.$

In general we would expect that the mixing time is faster, the greater the K-S entropy, but there are known  counterexamples depending on the choice of the initial probability distribution~\cite{crawford1983decay}. 

For quantum systems there is the conjecture, as we discussed in the introduction, that the black holes are the fastest scramblers of the universe and their scrambling time is proportional to the logarithm of their entropy. However, whether this scrambling time can be identified with the mixing time of the quantum dynamical system is an open question and different scenaria have been proposed~\cite{latora1999kolmogorov}.

\newpage
\section{The spectra of periods of ACML systems }\label{permodN}

In this section we study the problem of finding  the spectrum of the periods of  the evolution  operator $M$  of $n$ coupled Arnol'd cat maps mod $N$. They are expected to be a random function of $N;$ however, for special values of $N,$ a thorough along the lines of Falk and Dyson~\cite{falk_dyson}, which has been done only for  single cat map--$n=1$--can only   lead to  bounds and, only in  some cases, to exact expressions.

In this section we shall present  an algorithm  for finding  the period of the evolution  operator of $n$ cat maps,  using properties of the matrix Fibonacci polynomials.

 Since the dynamics is that of a system of coupled, ``inverted'' harmonic oscillators--that are, however,  constrained to a (compact) toroidal phase space, $\mathbb{T}^{2n}[N]$--we expect that this system describes maximally chaotic and mixing motion.

To be concrete  consider the action of the evolution operator ${\sf M}$ (cf. eq.~(\ref{Mevol2n}) ), for the system of $n$ coupled Arnol'd cat maps, on the discrete phase space 
$\mathbb{T}^{2n}[N]$.  According to  Theorem~\ref{Amevol2n} the $m-$th power of ${\sf M}$ is given by the expression
\begin{equation}
\label{Mmevol2n}
{\sf M}^m={\sf A}^{2m}=\left(\begin{array}{cc} {\sf C}_{2m-1} & {\sf C}_{2m}\\ {\sf C}_{2m} & {\sf C}_{2m+1}\end{array}\right) 
\end{equation}
where ${\sf C}_0=0$, ${\sf C}_1=I_{n\times n}$ and ${\sf C}_{m+1}={\sf C}{\sf C}_m+{\sf C}_{m-1}$, with $m=1,2,3,\ldots$.   

Since ${\sf M}\in\mathrm{Sp}_{2n}[\mathbb{Z}]$, its mod $N$ reduction belongs to Sp$_{2n}[\mathbb{Z}_N]$. The order of the latter--finite--group can be determined using the relations~(\ref{ChRemThSp}) and~(\ref{orderSp2np}). These imply that the order of  the evolution matrix ${\sf M}$ mod $N$, $T(N)$, must be a divisor of the order of Sp$_{2n}[\mathbb{Z}_N]$.

In order to determine $T(N)$ we make use of Theorem~\ref{Amevol2n}  as follows: 

Since ${\sf M}^{T(N)}=I_{2n\times 2n}\,\mathrm{mod}\,N$, it is obvious   that  ${\sf C}_{2T(N)-1}\equiv I_{n\times n}\,\mathrm{mod}\,N$ and ${\sf C}_{2T(N)}\equiv 0\,\mathrm{mod}\,N,$ which reduces the problem of finding the period to finding the least value of $m=T(N),$ for which  these two relations hold simultaneously.

The period of ${\sf M}\,\mathrm{mod}\,N$ is the smallest integer, $T(N)$, such that 
\begin{equation}
\label{TofN}
{\sf M}^{T(N)}\equiv\,I_{2n\times 2n}\,\mathrm{mod}\,N
\end{equation}
This implies that 
\begin{equation}
\label{TofN1}
{\sf M}^{T(N)}=
\left(\begin{array}{cc} {\sf C}_{2T(N)-1} & {\sf C}_{2T(N)}\\ {\sf C}_{2T(N)} & {\sf C}_{2T(N)+1}\end{array}\right)\,\mathrm{mod}\,N\equiv
\left(\begin{array}{cc} I_{n\times n} & 0_{n\times n} \\ 0_{n\times n} & I_{n\times n}\end{array}\right)\,\mathrm{mod}\,N
\end{equation} 
Therefore 
${\sf C}_{2T(N)}\equiv 0\,\mathrm{mod}\,N$, 
${\sf C}_{2T(N)-1}\equiv I_{n\times n}\,\mathrm{mod}\,N$ and 
${\sf C}_{2T(N)+1}={\sf C}\cdot {\sf C}_{2T(N)}+{\sf C}_{2T(N)-1}\equiv\,I_{n\times n}\,\mathrm{mod}\,N$. 
From these relations it is easy to show that $2T(N)$ is the period of the sequence of  matrices $\{ {\sf C}_m\,\mathrm{mod}\,N\}$: 
\begin{proof}
The starting point is the property that 
\begin{equation}
\label{C2TN}
\begin{array}{l}
\displaystyle
{\sf C}_{2T(N)-1}\equiv\,I_{n\times n}\,\mathrm{mod}\,N\\
\displaystyle 
{\sf C}_{2T(N)}\equiv 0_{n\times n}\,\mathrm{mod}\,N
\end{array}
\end{equation}
This implies that 
\begin{equation}
\label{Cseq}
\begin{array}{l}
\displaystyle
{\sf C}_{2T(N)+1}={\sf C}\cdot{\sf C}_{2T(N)}+{\sf C}_{2T(N)-1}\equiv I_{n\times n}={\sf C}_1\,\mathrm{mod}\,N\\
\displaystyle
{\sf C}_{2T(N)+2}={\sf C}\cdot I_{n\times n}\equiv {\sf C}={\sf C}_2\,\mathrm{mod}\,N\\
\displaystyle
{\sf C}_{2T(N)+3}={\sf C}\cdot{\sf C}_{2T(N)+2}+{\sf C}_{2T(N)+1}\equiv I_{n\times n}+{\sf C}^2={\sf C}_3\,\mathrm{mod}\,N\\
\end{array}
\end{equation}
\end{proof}
Now the key observation is that ${\sf C}_m$ is a polynomial in the matrix ${\sf C}$, with positive  integer coefficients and its degree is equal to $m-1$, which is even for $m$ odd and odd for $m$ even. 

In fact these polynomials turn out to be nothing else but the so-called Fibonacci polynomials--now defined over the space of integer  matrices mod $N.$  

The Fibonacci polynomials are defined by the recursion relation~\cite{philippou2001fibonacci}
\begin{equation}
\label{fibpoly}
{\sf F}_{m+1}(x)=x {\sf F}_m(x)+{\sf F}_{m-1}(x)
\end{equation}
with $x$ a formal variable  and initial conditions ${\sf F}_0(x)=0$ and ${\sf F}_1(x)=1$. 

The Fibonacci polynomials have been extensively studied, for $x\in\mathbb{R}$; what we note here is that they can be defined for $x={\sf C}$, i.e. matrices; and many of their remarkable properties carry over to this case. 
In order, therefore,  to find the periods $\mathrm{mod}\,N$ we must find, for a given evolution matrix $C,$ the  values of $m,$ for which the matrix Fibonacci polynomial  ${\sf F}_{2T(N)}(x={\sf C}),$ vanishes and, simultaneously,  ${\sf F}_{2T(N)-1}(x={\sf C})=I_{n\times n} (\mathrm{mod}\, N).$

Using that  ${\sf C}_m={\sf F}_m({\sf C})$ and their  explicit formula we readily find that 
\begin{equation}
\label{mfibo}
{\sf F}_m({\sf C})=
\sum _{j=0}^{\left[\frac{m-1}{2}\right]} \binom{-j+m-1}{j} {\sf C}^{-2 j+m-1},
\end{equation}
where $[\cdot]$ denotes the integer part of the argument. 

The reason these polynomials are particularly useful here is that the evolution in phase space is described by the recursion relation
\begin{equation}
\label{evolphsp}
(\bm{q}_m,\bm{p}_m)=(\bm{q}_{m-1},\bm{p}_{m-1})\cdot {\sf M}=(\bm{q}_0,\bm{p}_0)\cdot {\sf M}^m=(\bm{q}_0,\bm{p}_0)
\left(\begin{array}{cc} {\sf F}_{2m-1}({\sf C}) & {\sf F}_{2m}({\sf C}) \\ {\sf F}_{2m}({\sf C})& {\sf F}_{2m+1}({\sf C})\end{array}\right)
\end{equation}
and we realize that the coefficients in the last expression are the Fibonacci polynomials for matrices--which allow us to write the evolution in phase space in closed form:
\begin{equation}
\label{evolphspsol}
\begin{array}{l}
\displaystyle
\bm{q}_m=\bm{q}_0 {\sf F}_{2m-1}({\sf C}) + \bm{p}_0 {\sf F}_{2m}({\sf C})\\
\displaystyle 
\bm{p}_m=\bm{q}_0 {\sf F}_{2m}({\sf C}) + \bm{p}_0 {\sf F}_{2m+1}({\sf C})\\ 
\end{array}
\end{equation}
These equations highlight that  at step  $m=T(N)$, which is defined by eqs.~(\ref{C2TN}), $\bm{q}_{T(N)}=\bm{q}_0$ and 
$\bm{p}_{T(N)}=\bm{p}_0,$ consistent with $T(N)$ being the period of motion.

Our framework provides a lot of freedom for choosing the dynamics:
\begin{enumerate}
\item The number of maps, $n.$
\item The positive integer, $K,$ from the $K-$Fibonacci sequence.
\item The range of the non-locality, $l;$ $l=1$ is for nearest neighbor interactions, $l=2$ for next--to--nearest neighbor interactions, and so on; $l\leq l_\mathrm{max}=\mathrm{integer\,part}(n-1)/2.$
\item The couplings, $G_l,$ for $l=1,2,\ldots,l_\mathrm{max},$ that are, also, positive integers. 
  \end{enumerate}
To simplify matters, we shall provide numerical examples for the periods, $T[N],$ by choosing all the couplings, $G_l\equiv G$ and for a given range of non--locality. Moreover we shall choose $N=p$ a prime and  $n=2-5$ (so up to five maps). 
(The case $n=1,$ of the one cat map, has been thoroughly studied in the past~\cite{Keating1991,Axenides:2016nmf}.)

For these choices, the expression for the order of the group simplifies considerably:
\begin{equation}
\label{orderSpN=p}
\mathrm{ord}\left(\mathrm{Sp}_{2n}[\mathbb{Z}_{p}]\right)=  
p^{ n^2 }\prod_{i=1}^n\,\left(p^{2i}-1\right)
\end{equation}

To conclude this section we report on  the results of the numerical investigations for $T(N)$ for selected values of $N$ and $K=G=1$
(cf. fig.~\ref{20primes})  
\begin{figure}[thp]
\begin{center}
\subfigure{\includegraphics[scale=0.5]{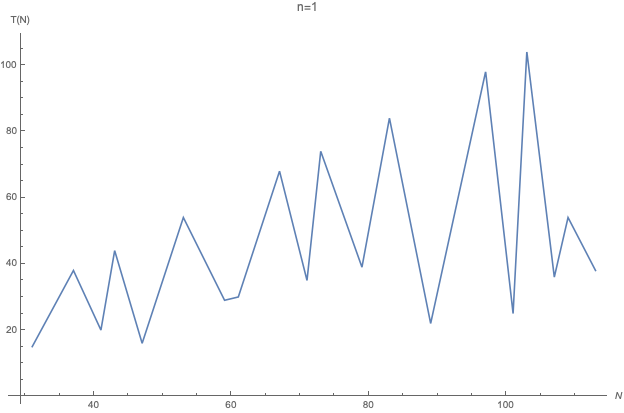}}
\subfigure{\includegraphics[scale=0.5]{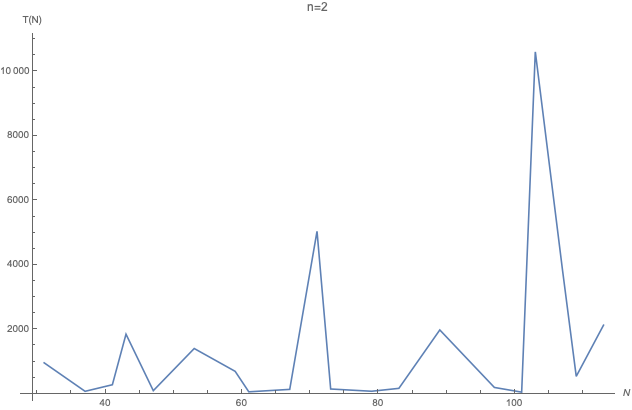}}
\end{center}
\caption[]{Period $T(N)$ for $N=p_{11}=31$ (the eleventh prime) to $p_{31}=113,$ (the thirty-first prime) for $l=1,$ $n=1$ and 2. We remark the dramatic change from $n=1,$ one map, to $n=2,$ two coupled maps. This reflects the dramatic increase in size of the order of the group.
} 
\label{20primes}
\end{figure}

\newpage
\section{Conclusions and outlook}\label{concl}
The study of classical and quantum chaotic field theories has received considerable attention recently,   inspired by work in turbulence,  the problem of statististical $n-$body thermalization (the so-called Eigenstate Thermalization Hypothesis),  as well as from the motivation to describe the quantum dynamics of black holes.  
To this end we started by setting forth the description of the classical  $n-$body lattice field theories, whose fundamental constituents are Arnol'd cat maps. The underlying quantum dynamics  will be the subject of future work.

More specifically, in this work, we have presented the consistent Hamiltonian dynamics of
  coupled map lattices of $n$ classical chaotic oscillators--Arnol'd cat maps (ACM)--in phase space and in  cofinguration space--with guiding principle the symplectic invariance of the phase space of the $n-$body system. 
  
  In configuration space, each  ACM is  located  on a single  site of the lattice  and acts on a two-dimensional toroidal phase space in a way that is   hyperbolic and exhibits  maximal mixing. 
  
Our method is based on the  representation of the  Arnol'd cat map in terms of Fibonacci sequences with the $n-$body coupled ACM generalization being realized  through the coupling of $n$  $k-$Fibonacci sequences.  

The   $n-$body system is thus defined  on a $2n-$dimensional toroidal phase space, by the action of  elements of the symplectic group  Sp$_{2n}[\mathbb{Z}]$, which is maximally  hyperbolic and mixing, thanks to the periodic boundary conditions. 

The corresponding equations of motion in configuration space are those of $n,$ linearly coupled,  inverted harmonic oscillators. It is interesting to stress that it is the boundary conditions in the phase space that ensure that the system doesn't have runaway behavior. This is an example of how mechanical systems,  with unbounded potentials, can be understood as chaotic systems, upon imposing periodic  boundary conditions--in phase space.

The chaotic properties of the system as a whole are quite intricate, despite the simplicity of the couplings. They depend, indeed, on the possibility of  tuning their  locality and strength and can be understood through their periods, Lyapunov spectra and Kolmogorov-Sinai entropies. An interesting point, which will be the subject of future work is the 
discrete conformal symmetry, that emerges, when the range of the couplings  becomes maximal.

A further topic is the construction of quantum Arnol'd cat  map lattices,  along with their continuum (scaling) limit as field theories.  It is here that the issues of closed subsystem thermalization dynamics, the Eigenstate Thermalization Hypothesis, as well as the saturation of the fast scrambling bound,  for many-body systems, can be framed and consistently treated. 
A useful diagnostic for this is  the set of  so-called ``out of time-order correlation functions''~\cite{Kurchan,Foini:2019tii}.

{\bf Acknowledgements:} This research was funded by the CNRS IEA (International Emerging Actions) program ``Chaotic behavior of closed quantum systems'' under contract 318687.
\newpage
\appendix
\section{The Fibonacci polynomials and their basic properties}\label{appkfibseq}
The Fibonacci sequence is one of the integer sequences, which has been studied, for a long time and there are journals dedicated to its properties and their applications. 
One generalization is provided by the sequence of polynomials, $f_m(x),$ defined by 
\begin{equation}
\label{fibinacci_seq}
\begin{array}{l}
f_0(x)=0; f_1(x)=1\\
f_{m+1}(x)= x f_m(x)+f_{m-1}(x)\\
\end{array}
\end{equation}
which can be written  in matrix form
\begin{equation}
\label{matrixfib}
\left(\begin{array}{c} f_m(x)\\f_{m+1}(x)\end{array}\right)=\underbrace{\left(\begin{array}{cc} 0 & 1 \\ 1 & x\end{array}\right)}_{{\sf A}(x)}\left(\begin{array}{c} f_{m-1}(x) \\ f_m(x)\end{array}\right)
\end{equation}
The matrix ${\sf A}(x)$ is not a symplectic matrix, but it satisfies 
\begin{equation}
\label{nonsymplA}
{\sf A}(x)^\mathrm{T}{\sf J}{\sf A}(x)=-{\sf J}
\end{equation}
where ${\sf J}$ is the symplectic matrix~(\ref{Jform}), for $n=1.$

The integer Fibonacci sequence, $f_m=f_m(x=1)$ and the integer $k-$Fibonacci sequence corresponds to $f_m(x=k)$ with $k=2,3,\ldots$

These sequences are related to  the ``golden'' and ``silver ratios'' by 
\begin{equation}
\label{fibratios}
\lim_{m\to\infty}\,\frac{f_{m+1}(x)}{f_m(x)}=\gamma(x)=\frac{x+\sqrt{x^2+4}}{2}
\end{equation}
for $x=1$ and 2 respectively. 

The Fibonacci polynomials are given explicitly by the relation
\begin{equation}
\label{mfibopoly}
f_m(x)=
\sum _{j=0}^{\left[\frac{m-1}{2}\right]} \binom{-j+m-1}{j} x^{-2 j+m-1},
\end{equation}
where $[\cdot]$ denotes the integer part of the argument. 

The Fibonacci polynomials are generated by powers of the matrix ${\sf A}(x)$ {\em viz.}
\begin{equation}
\label{FibpolyAofx}
{\sf A}(x)^m = \left(\begin{array}{cc}  f_{m-1}(x) & f_m(x)\\ f_m(x) & f_{m+1}(x)\end{array}\right)
\end{equation}
This relation is the origin, in fact, of the properties of the Fibonacci polynomials:
\begin{equation}
\label{propfibpoly}
\begin{array}{l}
\displaystyle
\mathrm{det}\,{\sf A}^m(x)=f_{m-1}(x)f_{m+1}(x)-f_m(x)^2=(-)^m\\
\displaystyle
{\sf A}^{pq}(x)=[{\sf A}^p(x)]^q=[{\sf A}^q(x)]^p\\

\displaystyle
{\sf A}^{p}(x) {\sf A}^{q}(x)={\sf A}^{p+q}(x)
\end{array}
\end{equation}
These imply that 
\begin{equation}
\label{propfibpoly1}
\left(\begin{array}{cc}  f_{pq-1}(x) & f_{pq}(x)\\ f_{pq}(x) & f_{pq+1}(x)\end{array}\right)= 
\left(\begin{array}{cc}  f_{p-1}(x) & f_p(x)\\ f_p(x) & f_{p+1}(x)\end{array}\right)^q=
\left(\begin{array}{cc}  f_{q-1}(x) & f_q(x)\\ f_q(x) & f_{q+1}(x)\end{array}\right)^p
\end{equation}

For future reference, we call ${\sf A}^2  ,$ ACM$_1$, to indicate that it describes the motion of a single particle. Below we shall study the dynamics of $n$ particles.

Since the matrix ${\sf A}(x)$ doesn't depend on $m,$ we can solve the recursion relation in closed form, by setting $f_m\equiv C \rho(x)^m$ and find the equation, satisfied by $\rho(x)$
$$
\rho(x)^{m+1}=x\rho(x)^m+\rho(x)^{m-1}\Leftrightarrow \rho(x)^2-x\rho(x)-1=0\Leftrightarrow \rho(x)\equiv \rho(x)_\pm=\frac{x\pm\sqrt{x^2+4}}{2}
$$
Therefore, we may express $f_m(x)$ as a linear combination of $\rho_+^m(x)$ and $\rho_-^m(x)=(-)^m\rho_+^{-m}(x)$:
\be
\label{solfib}
f_m(x)=A_+\rho_+^m(x)+A_-\rho_-^m(x)\Leftrightarrow\left\{\begin{array}{l} f_0=A_+ + A_- = 0\\ f_1 = A_+\rho(x)_+ + A_-\rho(x)_-=1\end{array}\right.
\ee
whence we find that 
$$
A_+ = -A_-=\frac{1}{\rho_+(x) -\rho_-(x)}=\frac{1}{\sqrt{x^2+4}}
$$
therefore,
\be
\label{solfib1}
f_m(x)=\frac{\rho(x)_+^m-(-)^m\rho(x)_+^{-m}}{\sqrt{x^2+4}}
\ee
It's quite fascinating that the LHS of this expression is a polynomial in $x$ that, moreover, takes integer values for integer values of $x$!

\newpage
\section{The conservation laws of ACML systems}\label{conslawsmodN}
In this section we study the conservation laws of the discrete evolution equations, 
\begin{equation}
\label{discreteevoleq}
\bm{q}_{m+1}-2\bm{q}_m+\bm{q}_{m-1}=\bm{q}_m{\sf C}^2
\end{equation}
where ${\sf C}$ is the symmetric, integer-valued matrix, given by the expression~(\ref{Ctransinv}), i.e.
\begin{equation}
\label{Ctransinv1} 
{\sf C}= KI_{n\times n} + \sum_{l=1}^{[n/2]-1}\,G_l\left( {\sf P}^l+\left[{\sf P}^\mathrm{T}\right]^l\right)
\end{equation}
In this equation $\bm{q}\in\mathbb{T}^n,$ whose compactness ensures mixing and implies that in eq.~(\ref{discreteevoleq}) there is an implicit ``mod 1'', to ensure that $\bm{q}_m\in\mathbb{T}^n$ for all timesteps.

A conservation law is related to, either, to the existence of an eigenvalue equal to 1 of the evolution operator, ${\sf M},$ or to a degeneracy of eigenvalues of ${\sf M}.$
These, in turn, can be recast in terms of properties of the matrix ${\sf C}.$The first case corresponds to a zero eigenvalue of ${\sf C}.$ The corresponding eigenvector, $\bm{a}\in\mathbb{T}^n,$ allows us to identify the symmetry as the translation $\bm{q}_m\to \bm{q}_m+\bm{a}.$ In this case, there exists a solution of Newton's equations, which is 
linear in time, {\em viz.}
\begin{equation}
\label{lineartimesol}
\bm{q}_m=m\bm{a}
\end{equation}
This means that the corresponding Lyapunov exponent  vanishes. 

For the matrix ${\sf C}$ defined by eq.~(\ref{Ctransinv1}), when $G_l=G$ for $l=1$ and $G_l=0$ for $l>1,$ (i.e. nearest-neighbor interactions) we can verify that  such an eigenvector always exists, when $K=2G,$ and  $n$ is even and is given by the expression 
\begin{equation}
\label{KGzeromodenn}
a_I = \frac{1}{\sqrt{n}}(-)^I
\end{equation}
where $I=1,2,\ldots,n.$

When couplings beyond those between nearest-neighbors are non-zero, $n$ must, still,  be even and the condition on the couplings, for the existence of a zeromode,  becomes
\begin{equation}
\label{KGzeomode}
K+2\sum_{l=1}^L G_l(-)^l=0
\end{equation}
Let us now consider the case, when the spectrum of the matrix ${\sf C}$ shows degeneracies, which, in turn, correspond to degeneracies in the spectrum of the Lyapunov exponents. 

We observe that translation invariance in the target space, $\bm{q}\to\bm{q}+\bm{a},$ corresponds to a symmetry that is an inhomogeneous transformation, which is the hallmark of zeromodes, while degeneracies correspond to symmetries given by homogeneous transformations. 

To look for such transformations we work as follows:

Since the target space is a torus, there exists a group of transformations, beyond the translations mod 1, namely the orthogonal group over the integers,  O$_n[\mathbb{Z}].$
By definition, an element ${\sf R}\in\mathrm{O}_n[\mathbb{Z}]$ satisfies the condition ${\sf R}{\sf R}^\mathrm{T}=I_{n\times n}.$

Applying such a transformation to the equation of motion, we find that the condition
\begin{equation}
\label{rotations}
{\sf R}^\mathrm{T}{\sf C}{\sf R}={\sf C}
\end{equation}
guarantees that these rotations are symmetries of the equations of motion. 

All such transformations define a subgroup of O$_n[\mathbb{Z}],$ that is, therefore the invariance group of the ACML. In this case, the existence of a zeromode $\bm{a}$ leads  to the existence of additional zeromodes, given by $\bm{a}{\sf R}.$ This defines a linear subspace of zeromodes, labeled by all such matrices ${\sf R}.$ 

That this group isn't empty follows from the particular form of the matrix  ${\sf C},$ which commutes with the matrix ${\sf P}.$  The matrix ${\sf P}$ belongs to O$_n[\mathbb{Z}]$ and represents, by a shift along the lattice, a rotation in the target space! 

This particular symmetry is the reason for the degeneracy in the spectrum, ${\sf D}_I, I=1,2,\ldots,n$ namely
\begin{equation}
\label{Ddegeneracy}
D_I = D_{n-I}
\end{equation}
So, finally, degeneracies of the spectrum of  ${\sf C},$ are explained by the existence of rotations ${\sf R}\in\mathrm{O}_n[\mathbb{Z}],$ which commute with ${\sf P}.$

The above transformations describe discrete  spatial translations as well as rotations in the target space; however, there exists a further, important, symmetry, of the equations of motion~(\ref{discreteevoleq}), namely that of discrete translations in time, $m\to m+1.$

An immediate consequence of this symmetry is that, from the block form for ${\sf M},$
\begin{equation}
\label{blockM}
{\sf M}=\left(\begin{array}{cc} {\sf A} & {\sf B}\\ {\sf C} & {\sf D}\end{array}\right)
\end{equation}
we find that the quadratic form, $Q(\bm{x}),$  given by the expression
\begin{equation}
\label{conservedQofx}
Q(\bm{x})=\bm{q}^\mathrm{T}{\sf B}^\mathrm{T}\bm{q}+\bm{q}^\mathrm{T}\left({\sf D}^\mathrm{T}-{\sf A}\right)\bm{p}-\bm{p}^\mathrm{T}{\sf C}^\mathrm{T}\bm{p}
\end{equation}
is conserved in time:
\begin{equation}
\label{Qmplus1=Qm}
Q(\bm{x}_{m+1})=Q(\bm{x}_m)
\end{equation}
where $\bm{x}_m=(\bm{q}_m,\bm{p}_m).$

For the particular case of the CACML, ${\sf A}=I_{n\times n},$ ${\sf B}={\sf C}$ and ${\sf D}=1+{\sf C}^2,$ we find the expression 
\begin{equation}
\label{conservedQACML}
Q(\bm{x})=\bm{q}^\mathrm{T}{\sf C}\bm{q}+\bm{q}^\mathrm{T}{\sf C}^2\bm{p}-\bm{p}^\mathrm{T}{\sf C}\bm{p}
\end{equation}
since ${\sf C}$ is symmetric.

Of course what is significant is that the global sign is arbitrary, therefore, we can write it in a more ``conventional'' form as
\begin{equation}
\label{conservedQACML1}
Q(\bm{x})=\bm{p}^\mathrm{T}{\sf C}\bm{p}-\bm{q}^\mathrm{T}{\sf C}\bm{q}-\bm{q}^\mathrm{T}{\sf C}^2\bm{p}
\end{equation}
where the mod 1 operation is implicit. 

This, particular, solution will play a significant role in the construction of the quantum, unitary, evolution operator $U({\sf M})$ for the corresponding quantum mechanical system of $n$ coupled Arnol'd cat maps.

\newpage
\bibliographystyle{utphys}
\bibliography{ads2discrete}
\end{document}